\newtheorem{theorem}{Theorem}
\newtheorem{corollary}{Corollary}
\newtheorem{lemma}{Lemma}
\begin{document}
\title{Online Vector Scheduling and Generalized Load Balancing}
\IEEEoverridecommandlockouts
\author{
\IEEEauthorblockN{Xiaojun Zhu\thanks{The work was done when the first author was visiting the College of William and Mary. This paper has been accepted to JPDC as a research note \cite{jpdc}. The current version contains more content than the published one due to page limitation of research notes of JPDC.}\IEEEauthorrefmark{1}\IEEEauthorrefmark{2}, Qun Li\IEEEauthorrefmark{2}, Weizhen Mao\IEEEauthorrefmark{2} and Guihai Chen\IEEEauthorrefmark{1}}
\IEEEauthorblockA{\IEEEauthorrefmark{1} State Key Laboratory for Novel Software Technology, Nanjing University, Nanjing, P. R. China}
\IEEEauthorblockA{\IEEEauthorrefmark{2} Department of Computer Science, the College of William and Mary, Williamsburg, VA, USA}
\IEEEauthorblockA{Email: gxjzhu@gmail.com, \{liqun,wm\}@cs.wm.edu, gchen@nju.edu.cn}
}

\maketitle
\pagenumbering{2}
\begin{abstract}
We give a polynomial time reduction from vector scheduling problem (VS)  to generalized load balancing  problem (GLB). This reduction gives the first non-trivial  online algorithm for VS where vectors come in an online fashion. The online algorithm is very simple in that each vector only needs to minimize the $L_{\ln(md)}$ norm of the resulting load when it comes, where $m$ is the number of partitions and $d$ is the dimension of vectors. It  has an approximation bound of $e\log(md)$, which is in $O(\ln(md))$, so it also improves the  $O(\ln^2d)$ bound of the existing polynomial time algorithm for VS. Additionally, the reduction  shows that GLB does not have constant approximation algorithms that run in polynomial time unless $P=NP$.
\end{abstract}

\section{Introduction}
Scheduling with costs is a very well studied problem in combinatorial optimization. The traditional paradigm assumes single-cost scenario: each job incurs a single cost to the machine that it is assigned to. The \emph{load} of a machine is the total cost incurred by the jobs it serves. The objective is to minimize the \emph{makespan}, the maximum machine load.
Vector scheduling and generalized load balancing extend the scenario in different directions.

Vector scheduling assumes that each job incurs a vector cost to the machine that it is assigned to.  The load of a machine is defined as the maximum
cost among all dimensions. The objective is to minimize the makespan. Vector scheduling is a multi-dimensional generalization of the traditional paradigm.
It finds application in multi-dimensional resource scheduling in parallel query optimization~\cite{mdpacking}. For example, a task may have requirements for CPU, memory and network at the same time, and this requirement is best described by a vector of CUP, memory and network, instead of an aggregate measure. In this scenario, the load of a server is also described by a vector.  To solve vector scheduling, there are three approximation solutions
 \cite{mdpacking}. Two of them are deterministic algorithms based on
derandomization of a randomized algorithm, with one providing $O(\ln ^2d)$ approximation\footnote{In this paper, $e$ denotes the natural number, $\ln(\cdot)$ denotes the natural logarithm,
and $\log(\cdot)$ denotes the logarithm base $2$.},
where $d$ is the dimension of vectors, and the other providing $O(\ln d)$ approximation with running time polynomial in $n^d$, where $n$ is the number of vectors. The third algorithm is a randomized algorithm, which assigns each vector to a uniformly and randomly chosen partition. It gives $O(\ln dm/\ln\ln dm)$ approximation with high probability, where $m$ is the number of partitions (servers).
For fixed $d$, there exists a polynomial time approximation scheme (PTAS)~\cite{mdpacking}. A PTAS has also been proposed for  a wide class of cost functions (rather than $\max$) \cite{general-vector-assignment}.

Generalized load balancing is recently introduced to model the effect of wireless interference~\cite{xuinfo}\cite{glb}.  Each job incurs costs to all machines, no matter which machine it is assigned to. The exact cost incurred by a job to a specific machine is dependent on which machine the job is assigned to. The load of a machine is the total cost incurred by all the jobs, instead of just the jobs it serves. This model is well suited for wireless transmission, since, in wireless network, a user may influence all APs in its transmission range due to the broadcast nature of wireless signal. To solve the generalized load balancing problem, the current solution is an online algorithm, adapted from the recent progress in online scheduling on traditional model~\cite{lpnormbetterbounds}. The solution, though provides good approximation, is rather simple: each job selects the machine to minimize the $L_\tau$ norm of the resulting loads at all machines where $\tau$ is a constant parameter to be optimized.
To avoid confusion, we keep the two terms \emph{job} and \emph{machine} unchanged for generalized load balancing, while refer to job and machine in the vector scheduling model as  \emph{vector} and \emph{partition} respectively.

We make two contributions. First, we present an approach to encode any vector scheduling instance by an instance of generalized load balancing problem (Section~\ref{sect:encode}). This encoding method  directly shows that generalized load balancing problem does not admit constant approximation algorithms unless $P=NP$.  Second, we design the first non-trivial online algorithm for vector scheduling based on the encoding method (Section~\ref{sect:online}). Directly applying the encoding method does not necessarily lead to a polynomial time algorithm, because it needs to compute the $L_{\ln l}$ norm function ($l$ is the number of machines), and it is unclear whether this norm can be computed in polynomial time. We eliminate this uncertainty by rounding $\ln l$ to the next integer, guaranteeing polynomial running time. In addition, we prove that the approximation loss due to rounding is  small.
We conclude this section by the following two definitions.

\subsection{Vector Scheduling}
 We are given positive integers $n,d,m$. There are a set $\mathcal V$ of $n$ rational and $d$-dimensional vectors $p_1,p_2,\ldots,p_n$ from $[0,\infty)^d$. Denote vector $p_i=(p_{i1},\ldots,p_{id})$. We need to partition the vectors in $\mathcal V$ into $m$ sets $A_1,\ldots,A_m$. The problem is to find a partition to minimize $\max_{1\le i\le m}\|\overline A_i\|_{\infty}$ where $\overline{A_i}=\sum_{j\in A_i}p_j$ is the sum of the vectors in $A_i$, and $\|\overline{A_i}\|_{\infty}$ is the infinity norm defined as the maximum element in the vector $\overline{A_i}$.
For the case $m\ge n$, there is a  trivial optimal solution that assigns vectors to distinct partitions. Therefore, we only consider the case $m<n$.

For ease of presentation, we give an equivalent integer program formulation. Let $x_{ij}$ be the indicator variable such that $x_{ij}=1$ if and only if vector $p_i$ is assigned to partition $A_j$. Then
\[
\|\overline A_j\|_{\infty}=\max_{1\le k\le d}\sum_i x_{ij}p_{ik}
\]

 The vector scheduling problem can be rewritten as

 \begin{equation*}
\begin{aligned}
\min_{\mathbf x}\max_{j,k}&\quad\sum_{i}x_{ij}p_{ik}\\
\text{subject to} \,&\\
& \sum_j x_{ij}=1,\quad\forall i\\
& x_{ij}\in \{0,1\},\quad \forall i,j
\end{aligned}
\qquad
\text{(VS)}
\end{equation*}

\subsection{Generalized Load Balancing}
This formulation first appears in \cite{glb}. We reformulate it with slightly different notations. There are a set $\mathcal M$ of independent machines, and a set $\mathcal J$ of jobs. If job $i$ is assigned to machine $j$, there is non-negative cost $c_{ijk}$ to machine $k$. The \emph{load} of a machine is defined as its total cost. The problem is to find an assignment (or schedule) to minimize the \emph{makespan}, the maximum load of all the machines.
This problem can be formally defined as follows.
\begin{equation*}
\begin{aligned}
\min_{\mathbf x}\max_k&\quad\sum_{i,j}x_{ij}c_{ijk}\\
\text{subject to} \,&\\
& \sum_j x_{ij}=1,\quad\forall i\in \mathcal J\\
& x_{ij}\in \{0,1\},\quad \forall i\in \mathcal J,j\in\mathcal M
\end{aligned}
\qquad
\text{(GLB)}
\end{equation*}
where $\mathbf x\in\{0,1\}^{|\mathcal J|\times|\mathcal M|}$ is the assignment matrix with elements $x_{ij}=1$ if and only if job $i$ is assigned to machine $j$. The two constraints require each job to be assigned to one machine.

\section{Encoding Vector Scheduling by Generalized Load Balancing}
\label{sect:encode}
We first create a  GLB instance for any VS instance, then prove their equivalence. At last, we discuss the hardness of GLB and  extend the VS model.

\subsection{Creating GLB Instances}
Comparing VS to GLB, we can find that they mainly differ in the subscripts of $\max$ and $\sum$. Our construction is inspired by this observation.

Given as input to VS the vector set $\mathcal V$ and $m$ partitions, we construct the GLB instance as follows. We set the jobs $\mathcal J=\mathcal V$. For each partition $A_j$ and its $k$-th dimension, we construct a machine, denoted by the pair $(j,k)$. Thus, the constructed machine set $\mathcal M$ is $\{(j,k)\mid j=1,2,\ldots,m\text{ and } k=1,2,\ldots,d\}$.  We refer to a machine as a pair of indices so that we can map the machine back to its corresponding partition and dimension easily. For a machine $t=(j,k)$ where $t\in \mathcal M$, we refer to the partition $j$ as $t_1$, and the dimension $k$ as $t_2$, i.e., $t=(t_1,t_2)$.
 We can see that there are totally $d$ machines ($t$ included) corresponding to the same partition as the machine $t$. We denote $[t]$ as the set of these machines,
  i.e., $[t]=\{(j,1),(j,2),\ldots,(j,d)\}$, where $j=t_1$. Among these $d$ machines, we select the first one $(j,1)$ as the \emph{anchor machine},
   denoted by $\overline t$, such that a vector chooses partition $A_j$ in VS if and only if the corresponding job chooses $\overline t$ in the new GLB problem.

The incurred cost $c_{ist}$ of job $i$ to machine $t$ if $i$ chooses machine $s$  is defined as
 \begin{numcases}{c_{ist}=}
p_{it_2}   & if  $s=\overline t$\label{eq1}\\
   \infty & if  $s\in [t]\wedge s\ne \overline t$\label{eq2}\\
 0 & if $s\notin [t]$\label{eq3}
 \end{numcases}
 where (\ref{eq1})(\ref{eq2}) are for the situation where $s$ and $t$ correspond to the same partition. They force a job to select only the anchor machines.  (\ref{eq3}) is for the situation where $s$ and $t$ correspond to different partitions. In this case, there is no load increase.

The resulting GLB instance is defined as VS-GLB:
\begin{equation*}
\begin{aligned}
\min_{\mathbf x'}\max_t&\quad\sum_{i,s}x'_{is}c_{ist}\\
\text{subject to} \,&\\
& \sum_s x'_{is}=1,\quad\forall i\in \mathcal J\\
& x'_{is}\in \{0,1\},\quad \forall i\in \mathcal J,s\in\mathcal M
\end{aligned}
\qquad
\text{(VS-GLB)}
\end{equation*}
To avoid the confusion with the general GLB problem, we intentionally use different notations $\mathbf x'$, $s$ and $t$. The notation $i$ is kept since it is in 1-1 correspondence with the vectors in VS.

As an example, consider the case when $d=1$. All vectors in VS have only one element, and there is only one machine in VS-GLB representing a partition in VS. The objective of VS becomes $\max_{j}\sum_i x_{ij}p_{i1}$. On the other hand, the objective of VS-GLB is $\max_t \sum_{i}x'_{it}c_{itt}=\max_t \sum_{i}x'_{it}p_{i1}$. Since any machine $t$ corresponds to a distinct partition $A_j$, simply changing subscripts shows that the two problems are equivalent. For the case when $d>1$, the proof is much involved, which we delay to Section~\ref{sect:equiv}.

\begin{theorem}
The construction of VS-GLB can be done in polynomial time.
\end{theorem}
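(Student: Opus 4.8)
The plan is to verify that every component of the VS-GLB construction---the job set, the machine set, the anchor machines, and the cost tensor---can be written down in time polynomial in the size of the VS input, which is described by $n$, $d$, $m$ together with the $nd$ rationals $p_{ik}$. First I would count the objects to be created: the job set $\mathcal J=\mathcal V$ is a relabelling of the $n$ input vectors and costs nothing beyond reading the input; the machine set $\mathcal M=\{(j,k): 1\le j\le m,\ 1\le k\le d\}$ has exactly $md$ elements, each encodable as a pair of indices of bit-length $O(\log m+\log d)$, so $\mathcal M$ is produced in $O(md\,(\log m+\log d))$ time. The anchor map $t\mapsto\overline t=(t_1,1)$ is a trivial $O(1)$-per-machine operation.

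Next I would bound the work of filling in the cost tensor $c_{ist}$. It has $|\mathcal J|\cdot|\mathcal M|\cdot|\mathcal M| = n\cdot(md)^2$ entries, and each entry is determined by the three-way case split (\ref{eq1})--(\ref{eq3}): given $i$, $s=(s_1,s_2)$ and $t=(t_1,t_2)$ one checks whether $s_1=t_1$ (same partition) and, if so, whether $s_2=1$, in $O(1)$ index comparisons, then outputs either the rational $p_{it_2}$ (an entry already present in the input), the symbol $\infty$, or $0$. So the tensor is built in $O\bigl(n (md)^2 L\bigr)$ time, where $L$ bounds the encoding length of an input rational $p_{ik}$; this is polynomial in the input size. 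I would also note, for completeness, that the objective $\max_t\sum_{i,s}x'_{is}c_{ist}$ and the assignment constraints are purely syntactic once $\mathcal J$, $\mathcal M$ and $c$ are in hand, so no additional work is incurred there.

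I do not expect a genuine obstacle here: the statement is a bookkeeping claim, and the only mild subtlety is the appearance of $\infty$ as a cost. I would handle this either by allowing $\infty$ as a formal symbol in the GLB instance (the later equivalence proof only needs that no optimal schedule ever pays an infinite cost), or by replacing each $\infty$ with a finite but sufficiently large value such as $1+\sum_{i,k}p_{ik}$, which still has polynomial bit-length and strictly exceeds any achievable finite makespan; either choice keeps the construction polynomial. The proof therefore reduces to stating these size and per-entry time bounds and observing that their product is polynomial in $n$, $m$, $d$ and the input bit-length.
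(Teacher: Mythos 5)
Your proposal is correct and follows essentially the same route as the paper: count the $n$ jobs, $md$ machines, and $n(md)^2$ cost entries, note each is produced in constant work from the input, and use $m<n$ to conclude everything is polynomial in the input size. Your extra remark about replacing $\infty$ by a large finite value (or treating it as a formal symbol) is a reasonable bit of added care that the paper's one-line proof omits, but it does not change the argument.
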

\begin{proof}
An instance of VS needs $\Omega(nd)$ bits. The constructed VS-GLB instance has $n$ jobs, $md$ machines and $n(md)^2$ costs. Since $m<n$, all three terms are polynomials in $n$ and $d$. The theorem follows immediately.
\end{proof}

The following theorem shows that the constructed VS-GLB problem is equivalent to its corresponding VS problem. Let $T$ be a  positive constant.
\begin{theorem}
\label{thm:equ}
There is a feasible solution $\mathbf x$ to VS with objective value $T$ if and only if there is a feasible solution $\mathbf x'$ to VS-GLB with the same objective value $T$.
\end{theorem}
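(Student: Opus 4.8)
The plan is to prove both implications by exhibiting explicit correspondences between feasible solutions that preserve the objective value, with the anchor machines serving as the bridge: a vector choosing partition $A_j$ in VS corresponds to its job choosing the anchor machine $(j,1)$ in VS-GLB, and conversely.

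\textbf{Mapping a VS solution to a VS-GLB solution.} Given a feasible $\mathbf{x}$ for VS with $\max_{j,k}\sum_i x_{ij}p_{ik}=T$, I would set $x'_{i,(j,1)}=1$ whenever $x_{ij}=1$ and all other entries of $\mathbf{x}'$ to $0$; since each vector lands in exactly one partition, each job lands on exactly one (anchor) machine, so $\mathbf{x}'$ is feasible for VS-GLB. The only computation needed is the load $\sum_{i,s}x'_{is}c_{ist}$ of an arbitrary machine $t=(j,k)$. The nonzero terms come only from anchor machines $s$; splitting on whether $s=\overline t=(j,1)$ or $s$ is the anchor of some other partition, (\ref{eq1}) gives $c_{ist}=p_{i t_2}=p_{ik}$ for $s=\overline t$, while (\ref{eq3}) gives cost $0$ for anchors of other partitions (case (\ref{eq2}) cannot arise, since $\mathbf{x}'$ uses only anchors). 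Hence machine $(j,k)$ has load $\sum_i x_{ij}p_{ik}$, and the makespan equals $T$.

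\textbf{Mapping a VS-GLB solution to a VS solution.} Given a feasible $\mathbf{x}'$ for VS-GLB with makespan $T$, the crucial first step is to show every job is assigned to an anchor machine. If some job $i$ had $x'_{is}=1$ with $s=(j',k')$ and $k'\ge 2$, then at the machine $t=\overline s=(j',1)$ we have $s\in[t]$ and $s\ne\overline t$, so by (\ref{eq2}) the summand $c_{ist}=\infty$ enters the load of $t$, making the makespan infinite and contradicting that $T$ is a finite positive constant. Thus $\mathbf{x}'$ uses only anchor machines, and I can define $x_{ij}=x'_{i,(j,1)}$, which is a feasible VS assignment. Repeating the same case split shows the load of machine $(j,k)$ is again $\sum_i x_{ij}p_{ik}$, so the VS objective of $\mathbf{x}$ equals $T$.

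The only subtle point is the argument that a finite makespan forces anchor-only assignments; the rest is bookkeeping with the three cases defining $c_{ist}$. This is also exactly where $d>1$ matters: for $d=1$ there are no non-anchor machines and (\ref{eq2}) is vacuous, which is why that case reduces to the plain relabeling already described in the text.
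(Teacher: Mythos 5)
Your proof is correct and follows essentially the same route as the paper: the forward map sends each partition choice to the corresponding anchor machine, the backward direction uses the finiteness of $T$ together with case (\ref{eq2}) to force anchor-only assignments (this is the paper's Lemma~\ref{lem:first}), and the load of machine $(j,k)$ is computed to be $\sum_i x_{ij}p_{ik}$ in both directions (the paper packages this as Lemma~\ref{lem:second}). The only difference is organizational: the paper factors the two key observations into standalone lemmas, while you inline them.
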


This theorem shows that VS and its corresponding VS-GLB  have the same optimal value. In addition, any $c$-approximation solution to VS-GLB, after transformation,  is also a $c$-approximation solution to VS, vice versa. We prove this theorem in Section~\ref{sect:equiv}.

It is worth mentioning that VS-GLB is a special instance of GLB. Since VS-GLB is converted from VS, VS is a special instance of GLB, which implies that VS should have approximation algorithms at least as good as GLB.
Unfortunately, on the contrary, the literature shows better approximation algorithm for GLB than that for VS.
Hence, it is worth applying algorithms of GLB to VS.

\subsection{Proof of Equivalence}
\label{sect:equiv}
We first study the properties of feasible solutions to VS-GLB in Lemma 1 and Lemma 2, and then prove Theorem~\ref{thm:equ}.

 \begin{lemma}
 \label{lem:first}
Given a feasible solution $\mathbf x'$ to VS-GLB yielding objective value $T$, for any $i\in \mathcal J$, we have
\begin{enumerate}
  \item $\forall s\ne \overline s, x'_{is}=0$;
  \item $\exists j$ such that for $s=(j,1), x'_{is}=1$.
\end{enumerate}
 \end{lemma}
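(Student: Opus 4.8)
The plan is to exploit the fact that the objective value $T$ is a finite positive constant, so no term $c_{ist}$ with value $\infty$ can be "activated" by the solution $\mathbf{x}'$. Concretely, suppose toward a contradiction that for some job $i$ we have $x'_{is}=1$ for a machine $s$ with $s \neq \overline{s}$, i.e., $s$ is not an anchor machine. Let $t$ be the anchor machine of the partition $s_1$, namely $t = \overline{s} = (s_1,1)$. Then $s \in [t]$ and $s \neq \overline{t}$, so by~(\ref{eq2}) we have $c_{ist} = \infty$. Since $x'_{is}=1$, the load of machine $t$ is $\sum_{i',s'} x'_{i's'} c_{i's't} \ge x'_{is} c_{ist} = \infty$, contradicting that the makespan equals the finite value $T$. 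This proves part~1.

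For part~2, I would combine part~1 with the assignment constraint $\sum_s x'_{is} = 1$. By part~1, the only machines $s$ that can receive $x'_{is}=1$ are the anchor machines, i.e., those of the form $(j,1)$ for some $j \in \{1,\ldots,m\}$. Since the $x'_{is}$ are $\{0,1\}$-valued and sum to $1$ over all $s \in \mathcal{M}$, exactly one of them is $1$, and by part~1 that machine must be an anchor $(j,1)$ for some $j$. This $j$ is the required index.

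The main obstacle — really more of a point requiring care than a genuine difficulty — is handling the convention that $\infty$ appears as a cost: one must argue that a solution attaining a finite objective literally cannot place any positive weight on an infinite-cost pair, which is immediate once one observes that all costs are non-negative so there is no cancellation, and that $x'_{is} \cdot \infty = \infty$ whenever $x'_{is}=1$. A secondary subtlety is making sure the machine $t$ whose load we inspect actually exists and is distinct from $s$: since $d > 1$ in the nontrivial case and $s$ is non-anchor means $s_2 \neq 1$, the anchor $\overline{s} = (s_1,1)$ is indeed a different machine in $\mathcal{M}$, so its load is a genuine constraint being violated. I would state these observations briefly and then present the two-line contradiction arguments above.
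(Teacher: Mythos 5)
Your proof is correct and follows essentially the same route as the paper's: for part~1 you examine the load of the anchor machine $t=\overline{s}$, note that $c_{ist}=\infty$ by~(\ref{eq2}), and contradict the finiteness of $T$; for part~2 you combine part~1 with the constraint $\sum_s x'_{is}=1$. The extra remarks about non-negativity of costs and the existence of $\overline{s}$ are fine but not needed beyond what the paper already does implicitly.
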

 \begin{proof}
For 1), suppose $x'_{is}=1$ for some $s$ with $s\ne \overline s$. Then $x'_{is}c_{is\overline s}=\infty>T$, contradicting with $\max_{t} \sum_{i,s}x_{is}c_{ist}=T$.

For 2), since $\sum_s x'_{is}=1$, there exists some $s$ such that $x'_{is}=1$. Due to 1), we must have $s=\overline s$.
 \end{proof}

 \begin{lemma}
 \label{lem:second}
 Given a machine $t$, a job $i$, and a feasible solution $\mathbf x'$ to VS-GLB yielding objective value $T$, we have $\sum_{s}x'_{is}c_{ist}=x'_{i\overline{t}}p_{it_2}$.
 \end{lemma}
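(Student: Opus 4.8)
The plan is to expand the sum $\sum_s x'_{is} c_{ist}$ and eliminate all but one term using the structure of the cost function together with Lemma~\ref{lem:first}. First I would note that by Lemma~\ref{lem:first}, part~1), the only index $s$ for which $x'_{is}$ can be nonzero is $s=\overline s$, i.e., an anchor machine; and by part~2) there is exactly one such $s$ (call the associated partition index $j$), so $x'_{is}=1$ precisely when $s=(j,1)$ and $x'_{is}=0$ otherwise. Hence $\sum_s x'_{is} c_{ist} = c_{i s^* t}$ where $s^*=(j,1)$ is the unique anchor machine chosen by job $i$.

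Next I would determine the value of $c_{i s^* t}$ by case analysis on whether $s^*$ and $t$ correspond to the same partition. Case (a): $s^* \in [t]$, i.e., $t_1 = j$. Since $s^*=(j,1)=\overline t$ is the anchor machine of $t$, equation~(\ref{eq1}) gives $c_{i s^* t} = p_{i t_2}$. Moreover, in this case $x'_{i\overline t} = x'_{i s^*} = 1$, so $c_{i s^* t} = x'_{i\overline t} p_{i t_2}$ as claimed. Case (b): $s^* \notin [t]$, i.e., $t_1 \neq j$. Then equation~(\ref{eq3}) gives $c_{i s^* t} = 0$. On the other hand, the anchor machine $\overline t = (t_1, 1)$ is not the machine $s^* = (j,1)$ since $t_1 \neq j$, so by Lemma~\ref{lem:first} (or directly, since job $i$ chose $s^*$ and $\sum_s x'_{is}=1$) we have $x'_{i\overline t} = 0$, and thus $x'_{i\overline t} p_{i t_2} = 0 = c_{i s^* t}$. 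In both cases the identity $\sum_s x'_{is} c_{ist} = x'_{i\overline t} p_{i t_2}$ holds.

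There is no real obstacle here; the lemma is essentially bookkeeping that records how the sum over machines collapses once Lemma~\ref{lem:first} has pinned down the support of $x'$. The only point requiring mild care is keeping the two roles of the cost's subscripts straight: the middle index $s$ ranges over the machine the job is assigned to, while $t$ is the fixed machine whose load we are accounting for, and these interact only through whether $s$ is the anchor of $t$'s partition. Note also that the case $s=\overline t$ with $s \ne \overline s$ cannot arise for a nonzero term, which is exactly why equation~(\ref{eq2}) never contributes — a fact we inherit from Lemma~\ref{lem:first} rather than needing to re-derive.
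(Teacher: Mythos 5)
Your proof is correct and uses the same two ingredients as the paper's: Lemma~\ref{lem:first} to pin down the support of $\mathbf x'$ and the case definition of $c_{ist}$ to evaluate the surviving term. The paper merely applies them in the opposite order (first dropping the $s\notin[t]$ terms via~(\ref{eq3}), then collapsing within $[t]$ via Lemma~\ref{lem:first}), whereas you collapse the sum to the single chosen machine $s^*$ first and then do a case split on whether $s^*\in[t]$; the content is the same.
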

 \begin{proof}
Recall that $[t]=\{(t_1,1),(t_1,2),\ldots,(t_1,d)\}$. We have
 \begin{align}
 \sum_{s} x'_{is}c_{ist}&=\sum_{s\in [t]} x'_{is}c_{ist}+\sum_{s\notin [t]}x'_{is}c_{ist}\notag\\
 &=\sum_{s\in[t]} x'_{is}c_{ist}\label{eq:dkdk}\\
 &=x'_{i\overline t}c_{i\overline tt}\label{eq:sldkfds}\\
 &= x'_{i\overline t}p_{it_2}\label{eq:sldfjalsdkfj}
 \end{align}
 where (\ref{eq:dkdk}) is due to (\ref{eq3}), (\ref{eq:sldkfds}) is due to Lemma~\ref{lem:first}, and (\ref{eq:sldfjalsdkfj}) is due to (\ref{eq1}).
 \end{proof}

 With the two lemmas, we can now prove the equivalence.
 \begin{proof}[Proof of Theorem~\ref{thm:equ}]
 ``$\Longrightarrow$'' Given a feasible solution $\mathbf x$ to VS, construct a feasible solution $\mathbf x'$ to VS-GLB as follows. Set $x'_{i\overline s}=x_{is_1}$ and all others to be $0$. We first show that $\mathbf x'$ is a feasible solution to VS-GLB. Obviously, $\mathbf x'$ is an integer assignment. We will check that $\sum_{s}x'_{is}=1$. Observe that $x'_{is}=0$ if $s\ne \overline s$. We only need to consider $m$ machines $(1,1),(2,1),\ldots,(m,1)$. Since $\mathbf x$ is a feasible solution to VS, then for any $i\in \mathcal V$, there exists one and only one partition $A_j$ such that $x_{i,j}=1$. Our transformation sets $x'_{is}=1$ where $s=(j,1)$. So $\sum_{s}x'_{is}=1$.

 Second, we prove that the objective values of the two feasible solutions are equal.
 \begin{align}
 \max_t\sum_{i,s}x'_{is}c_{ist}&=\max_t\sum_{i\atop s\in [t]}x'_{is}c_{ist}\label{eq:first}\\
 &=\max_t \sum_{i}x'_{i\overline t}c_{i\overline tt}\label{eq:second}\\
 &=\max_t \sum_{i} x_{it_1}p_{it_2}\notag\\
 &=\max_{j,k} \sum_{i} x_{ij}p_{ik},\notag
 \end{align}
 where (\ref{eq:first}) is due to that $c_{ist}=0$ if $s\notin [t]$, and (\ref{eq:second}) is due to our assignment of $\mathbf x'$ that $x'_{is}=0$ if $s\ne \overline s$.

 ``$\Longleftarrow$'' Given $\mathbf x'$ for VS-GLB, construct $\mathbf x$ for VS as follows. Set $x_{ij}=x'_{i\overline s}$ where $s_1=j$. We show that $\mathbf x$ is a feasible solution to VS. Due to Lemma~\ref{lem:first}, for any $i$, there exists one $s$ such that $x'_{is}=1$ and $s=\overline s$. Therefore, there exists one $j$ such that $x_{ij}=1$. On the other hand, there cannot be two $j$s both with $x_{ij}=1$, otherwise $\mathbf x'$ is not a feasible solution to VS-GLB.

 For the objective value, we have
 \begin{align}
 \max_{j,k} \sum_{i} x_{ij}p_{ik}&=\max_{t=(j,k)}\sum_{i} x_{it_1}p_{it_2}\notag\\
 &=\max_t\sum_{i}x'_{i\overline t}p_{it_2}\notag\\
 &=\max_t\sum_{i,s}x'_{is}c_{ist},\label{eq:due22}
 \end{align}
where (\ref{eq:due22}) is due to Lemma~\ref{lem:second}. This completes our proof.

 \end{proof}

\subsection{Inapproximability for GLB}
It has been proved  that no polynomial time algorithm can give $c$-approximation solution to VS for any $c>1$ unless $NP=ZPP$ \cite{mdpacking}. Combining this result with Theorem~\ref{thm:equ}, we have the following theorem.
 \begin{theorem}
 \label{thm:apxhard}
For any constant $c>1$, there does not exist a polynomial time $c$-approximation algorithm  for GLB, unless $NP=ZPP$.
\end{theorem}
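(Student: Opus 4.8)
The plan is to prove the contrapositive by reduction from VS, chaining together the results already in hand: the polynomial-time construction (Theorem~1), the value-preserving equivalence (Theorem~\ref{thm:equ}), and the known inapproximability of VS from \cite{mdpacking}. Concretely, I would suppose for contradiction that for some fixed constant $c>1$ there is a polynomial-time $c$-approximation algorithm $\mathcal A$ for GLB, and then use $\mathcal A$ as a black box to obtain a polynomial-time $c$-approximation for VS, forcing $NP=ZPP$.

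Given a VS instance, the first step is to build the corresponding VS-GLB instance, which by Theorem~1 takes time polynomial in $n$ and $d$. One technical point must be dispatched here: the costs in (\ref{eq2}) are $\infty$, which is not rational, so VS-GLB as written is not literally a GLB instance. I would replace each such $\infty$ by $M:=c\cdot\sum_i\|p_i\|_\infty+1$, a number whose encoding length stays polynomial in the input and which depends only on the (fixed) constant $c$ and the input. Assigning all vectors to a single partition shows $OPT_{\mathrm{VS}}\le\sum_i\|p_i\|_\infty$, so by Theorem~\ref{thm:equ} we have $OPT_{\mathrm{VS\text{-}GLB}}=OPT_{\mathrm{VS}}\le\sum_i\|p_i\|_\infty$; hence any feasible GLB solution of value at most $c\cdot OPT_{\mathrm{VS\text{-}GLB}}$ has value strictly below $M$ and therefore never uses an $M$-cost machine pair. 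On such solutions the $M$-version and the original $\infty$-version coincide, so Lemmas~\ref{lem:first} and~\ref{lem:second}, and therefore Theorem~\ref{thm:equ}, still apply to the output of $\mathcal A$.

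Now run $\mathcal A$ on this finite-cost VS-GLB instance to get a feasible solution $\mathbf x'$ with objective value at most $c\cdot OPT_{\mathrm{VS\text{-}GLB}}=c\cdot OPT_{\mathrm{VS}}$. Apply the transformation from the ``$\Longleftarrow$'' direction of the proof of Theorem~\ref{thm:equ} (set $x_{ij}=x'_{i\overline s}$ where $s_1=j$, all else $0$) to turn $\mathbf x'$ into a feasible VS solution $\mathbf x$; the computation given there shows $\mathbf x$ has exactly the objective value of $\mathbf x'$. Thus, in polynomial time, we produced a VS solution of value at most $c\cdot OPT_{\mathrm{VS}}$. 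Since the hardness of \cite{mdpacking} already pertains to the meaningful regime $m<n$ treated here, this contradicts that result unless $NP=ZPP$, proving the theorem.

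I expect the only genuine obstacle to be the bookkeeping around the $\infty$ costs just sketched — ensuring the surrogate value $M$ is simultaneously large enough never to be touched by a $c$-approximate solution, small enough to keep the instance polynomial in size, and computable from data available at construction time (it is, being a function of $c$ and the input only). Everything else is a routine concatenation of Theorem~1, Theorem~\ref{thm:equ}, and the cited lower bound for VS.
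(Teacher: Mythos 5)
Your proposal is correct and follows essentially the same route as the paper: compose the polynomial-time construction of VS-GLB and the value-preserving equivalence of Theorem~\ref{thm:equ} with the known inapproximability of VS from \cite{mdpacking}. The only difference is that you additionally replace the $\infty$ costs by a finite surrogate $M$ and verify that no $c$-approximate solution ever touches an $M$-cost machine --- a legitimate technical point that the paper's one-paragraph proof silently glosses over, and your handling of it is sound.
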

\begin{proof}
Since VS-GLB is a special instance of GLB, any $c$-approximation algorithm for GLB can be used to obtain $c$-approximation solution to VS-GLB. By Theorem~\ref{thm:equ}, any $c$-approximation solution to VS-GLB is also a $c$-approximation solution to the corresponding VS. Therefore, the approximation algorithm is also a $c$-approximation algorithm for VS, a contradiction.
\end{proof}

We can obtain a stronger result by relaxing the assumption $NP\ne ZPP$ to $P\ne NP$. (It is a relaxation because $P\subseteq ZPP\subseteq NP$.) This can be done by examining the inapproximability proof for VS \cite{mdpacking}. The inapproximability proof relies on the result that no polynomial time algorithm can approximate chromatic number to within $n^{1-\epsilon}$ for any $\epsilon>0$ unless $NP=ZPP$. Recently, it has been proved that no polynomial time algorithm can approximate chromatic number to within $n^{1-\epsilon}$ for any $\epsilon>0$ unless $P=NP$ \cite{chromaticnumber}. Thus, we can change the assumption $NP\ne ZPP$ to $P\ne NP$ safely.

\begin{theorem}
For any constant $c>1$, there does not exist a polynomial time $c$-approximation algorithm  for GLB, unless $P=NP$.
\end{theorem}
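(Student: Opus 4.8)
The plan is to follow exactly the reasoning sketched in the paragraph immediately preceding the statement, but to write it out as a clean proof. The key observation is that the previous theorem (Theorem~\ref{thm:apxhard}) already gives the inapproximability of GLB under the assumption $NP \neq ZPP$, and that assumption was inherited \emph{solely} from the inapproximability result for VS in \cite{mdpacking}. So the entire task reduces to upgrading the hypothesis in the VS inapproximability statement from $NP \neq ZPP$ to $P \neq NP$, and then re-running the chain of implications (VS hardness $\Rightarrow$ VS-GLB hardness via Theorem~\ref{thm:equ} $\Rightarrow$ GLB hardness since VS-GLB is a special case of GLB).

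Concretely, the steps I would carry out are: (1) Recall that the inapproximability proof for VS in \cite{mdpacking} proceeds by a reduction from the problem of approximating the chromatic number, using the fact that $\chi(G)$ cannot be approximated within $n^{1-\epsilon}$ in polynomial time unless $NP = ZPP$. (2) Invoke the strengthened hardness-of-coloring result of \cite{chromaticnumber}, which replaces $NP = ZPP$ by $P = NP$ in that statement. (3) Observe that the reduction in \cite{mdpacking} is itself a deterministic polynomial-time reduction, so substituting the stronger coloring hardness directly yields: for any constant $c > 1$, no polynomial-time $c$-approximation algorithm for VS exists unless $P = NP$. (4) Now repeat the argument of Theorem~\ref{thm:apxhard} verbatim: a polynomial-time $c$-approximation algorithm for GLB would in particular $c$-approximate every VS-GLB instance, and by Theorem~\ref{thm:equ} this transforms into a $c$-approximation for the corresponding VS instance, contradicting step (3). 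Conclude the claimed statement.

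I would also remark, for rhetorical completeness, why this is legitimately a strengthening rather than a restatement: since $P \subseteq ZPP \subseteq NP$, the hypothesis ``$P \neq NP$'' is weaker than ``$NP \neq ZPP$,'' so ruling out a polynomial-time approximation under the weaker hypothesis is a stronger conclusion. No new combinatorial content is needed beyond citing \cite{chromaticnumber} and reusing Theorem~\ref{thm:equ}.

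The main (and essentially only) obstacle is a bookkeeping one: one must be sure that the reduction of \cite{mdpacking} from chromatic-number approximation to VS approximation is genuinely deterministic polynomial time, so that plugging in the derandomized coloring-hardness result of \cite{chromaticnumber} does not reintroduce randomness. If that reduction used randomness in an essential way, the argument would only give $NP \neq ZPP$ again; but inspection of \cite{mdpacking} shows the reduction is deterministic, so the upgrade goes through. Everything else is a mechanical re-invocation of Theorem~\ref{thm:equ} and the fact that VS-GLB instances are GLB instances.
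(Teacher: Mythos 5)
Your proposal is correct and follows exactly the paper's own argument: upgrade the chromatic-number hardness from $NP\ne ZPP$ to $P\ne NP$ via \cite{chromaticnumber}, note the VS reduction in \cite{mdpacking} is deterministic, and then rerun the chain of Theorem~\ref{thm:apxhard} through Theorem~\ref{thm:equ}. Your explicit check that the reduction introduces no randomness is a welcome bit of care the paper leaves implicit, but the route is the same.
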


\subsection{Extending to generalized VS}
Our construction of VS-GLB and proof can be easily extended to a general version of vector scheduling. In the current VS definition, all machines (partitions) are identical so that any job (vector) incurs the same vector cost to all machines. The machines can be generalized to be heterogeneous so that each job incurs a different vector cost to different machines. Formally, job $i$ incurs vector cost $p^{(j)}_i$ to machine $j$ if $i$ is assigned to machine $j$. The formulation and transformations can be slightly changed as follows. In the integer program formulation of VS, change the objective to $\max_{j,k}\sum_i x_{ij}p^{(j)}_{ik}$. Change $p_{it_2}$ in equation (\ref{eq1}) to be $p_{it_2}^{(t_1)}$. For Lemma 2, change $x'_{i\overline t}p_{it_2}$ to $x'_{i\overline t}p_{it_2}^{(t_1)}$. It can be verified that the proof of Theorem~\ref{thm:equ} is still valid with minor modifications.
The online algorithm adopted later is also valid for this general version of vector scheduling. For simplicity, we mainly focus on the original VS model.

\section{Online Algorithm for VS}
\label{sect:online}
Based on Theorem~\ref{thm:equ}, we can solve VS by its corresponding VS-GLB. We review the approximation algorithm~\cite{xuinfo} for GLB, and then modify it to solve VS.

Given a GLB instance and a positive number $\tau$, the algorithm~\cite{xuinfo} considers jobs one by one (in an arbitrary order) and assigns the current job to a machine to  minimize the $L_\tau$ norm\footnote{$L_\tau$ norm of a vector $x=(x_1,x_2,\ldots,x_t)$ is defined as $(\sum_i x_i^\tau)^{1/\tau}$.} of the resulting load of all machines. Specifically, suppose jobs are numbered as $1,2,\ldots,n$, the same as the considered order. Suppose the load of machine $k$ after jobs $1,2,\ldots,i-1$ are assigned is $l^{i-1}_k$. Then job $i$ is assigned to the machine
\[
\arg\min_j \left(\sum_k{(l^{i-1}_k+c_{ijk})^\tau}\right)^{1/\tau}.
\]
The above optimization problem can be solved by trying each possible machine. During the optimization, the computation of the last step of $L_\tau$ norm, $(\cdot)^{1/\tau}$, can be omitted. In addition, because the algorithm does not require the order of jobs and each job is assigned once, it can be implemented in an online fashion. This algorithm was originally proposed  for the traditional load balancing problem \cite{lpnormbetterbounds}, and recently extended to the GLB problem  \cite{xuinfo}. The parameter $\tau$ controls the approximation ratio of the algorithm, as shown in the following lemma.

\begin{lemma}[\cite{lpnormbetterbounds,xuinfo}]
\label{lem:appratio}
Minimizing $L_\tau$ norm gives $\frac{\tau}{\ln(2)}l^{1/\tau}$ approximation ratio to solve GLB where $l$ is the number of machines.
\end{lemma}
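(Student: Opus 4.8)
The plan is to bound the makespan of the greedy algorithm by the $L_\tau$ norm of its final load vector and then bound that norm directly with a telescoping potential argument. By scaling all costs we may assume the optimal makespan equals $1$, so an optimal schedule assigns each job $i$ to some machine $\sigma(i)$ and every machine $k$ then carries load $o_k := \sum_i c_{i\sigma(i)k} \le 1$. Number the jobs $1,\ldots,n$ in the order the algorithm processes them, let $l^i_k$ be the load of machine $k$ after the first $i$ jobs have been placed (so $l^0_k=0$ and $a_k := l^n_k$ is the final load), and put $\Phi^i := \sum_k (l^i_k)^\tau$. Since $x \mapsto x^{1/\tau}$ is increasing, when job $i$ arrives the greedy choice is at least as good as placing job $i$ on $\sigma(i)$, which gives $\Phi^i \le \sum_k (l^{i-1}_k + c_{i\sigma(i)k})^\tau$ and hence $\Phi^i - \Phi^{i-1} \le \sum_k [(l^{i-1}_k + c_{i\sigma(i)k})^\tau - (l^{i-1}_k)^\tau]$.

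The crux is to bound each summand so that it is \emph{linear} in the incremental cost $c_{i\sigma(i)k}$, which is what lets the sum over $i$ collapse. Writing $x = l^{i-1}_k$ and $y = c_{i\sigma(i)k} \in [0,1]$ and using convexity of $t \mapsto t^\tau$ at the point $x+y = (1-y)x + y(x+1)$, one gets $(x+y)^\tau \le (1-y)x^\tau + y(x+1)^\tau$, i.e.\ $(x+y)^\tau - x^\tau \le y\,[(x+1)^\tau - x^\tau]$. Because $x \mapsto (x+1)^\tau - x^\tau$ is non-decreasing and $l^{i-1}_k \le a_k$, the summand is at most $c_{i\sigma(i)k}\,[(a_k+1)^\tau - a_k^\tau]$. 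Telescoping $\Phi^n = \sum_i (\Phi^i - \Phi^{i-1})$ and using $\sum_i c_{i\sigma(i)k} = o_k \le 1$ then yields $\sum_k a_k^\tau \le \sum_k [(a_k+1)^\tau - a_k^\tau]$, that is, $2\sum_k a_k^\tau \le \sum_k (a_k+1)^\tau$.

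To finish, set $R := (\sum_k a_k^\tau)^{1/\tau}$ and let $l$ denote the number of machines. The triangle inequality for the $L_\tau$ norm gives $(\sum_k (a_k+1)^\tau)^{1/\tau} \le R + l^{1/\tau}$, so $2R^\tau \le (R + l^{1/\tau})^\tau$, i.e.\ $2^{1/\tau}R \le R + l^{1/\tau}$, i.e.\ $R \le l^{1/\tau}/(2^{1/\tau}-1)$. Since $2^{1/\tau}-1 = e^{(\ln 2)/\tau}-1 \ge (\ln 2)/\tau$, we obtain $R \le \frac{\tau}{\ln 2}\,l^{1/\tau}$, and therefore the algorithm's makespan $\max_k a_k \le (\sum_k a_k^\tau)^{1/\tau} = R \le \frac{\tau}{\ln 2}\,l^{1/\tau}$; undoing the initial scaling multiplies the right-hand side by the optimum, which is the claimed approximation ratio. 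I expect the one delicate point to be the choice of the convexity inequality in the second step: a cruder estimate such as $(x+y)^\tau \le 2^{\tau-1}(x^\tau + y^\tau)$ would leak an exponential-in-$\tau$ factor and ruin the constant. One also has to check that the bookkeeping carries over verbatim from the unrelated-machines analysis of \cite{lpnormbetterbounds} to GLB, where a job contributes to the load of every machine; since both the greedy rule and the quantity $o_k$ already account for exactly that contribution, the argument goes through unchanged, which is essentially the extension of \cite{xuinfo}.
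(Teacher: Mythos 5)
The paper does not prove this lemma itself---it imports it from \cite{lpnormbetterbounds,xuinfo}---and your argument is a correct, self-contained reconstruction of exactly the standard potential-function proof from those references: greedy vs.\ the optimal assignment on the potential $\sum_k l_k^\tau$, the convexity bound $(x+y)^\tau - x^\tau \le y[(x+1)^\tau - x^\tau]$ to linearize in the incremental cost, telescoping against $o_k \le \mathrm{OPT}$, Minkowski, and $e^{(\ln 2)/\tau}-1 \ge (\ln 2)/\tau$. The only (harmless) implicit assumptions worth stating are $\tau \ge 1$ (needed for convexity, monotonicity of $(x+1)^\tau - x^\tau$, and the triangle inequality) and $\mathrm{OPT} > 0$ so the normalization is legitimate.
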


Setting $\tau=\ln l$ yields the best approximation ratio $e\log l$. However, it is unclear whether the computation of $L_{\ln l}$ can be done in polynomial time. We consider this issue later.

\subsection{Adapting to VS}
To apply the above algorithm to VS, we can first solve VS-GLB and transform the solution to VS. This process can be simplified  by omitting the transformation between VS and VS-GLB.

 Recall that the algorithm is to assign vectors one by one. Consider a vector $p_i$ in VS.  To solve VS-GLB, this vector should choose a machine to minimize the $L_{\tau}$ norm of the resulting load. Due to the construction of VS-GLB, this vector can only choose from the \emph{anchor machines}, otherwise, the resulting $L_{\tau}$ norm would be infinite (definitely not the optimal choice).
Thus, this is equivalent to picking from the corresponding partitions in VS. After the assignment of any number of vectors that leads to partitions $A_1,A_2,\ldots,A_m$,
the $L_{\tau}$ norm of the  load of machines in VS-GLB
is, in fact, equal to
\[
f^{(\tau)}(A_1,\ldots,A_m)=\left(\|\overline A_1\|_{\tau}^\tau+\ldots+\|\overline A_m\|_{\tau}^\tau\right)^{1/\tau}
\]
where
\[\|\overline A_j\|_\tau^\tau=\sum_{k}\left(\sum_{i\in A_j}p_{ik}\right)^\tau.
\]
Suppose the assignment of vectors $p_1,p_2,\ldots,p_{i-1}$ leads to partitions $A_1,A_2,\ldots,A_m$. Let $f^{(\tau)}_{i,j}$ be $L_\tau$ norm of the resulting load if vector $p_i$ chooses partition $A_j$, i.e.,
\[
f^{(\tau)}_{i,j}=f^{(\tau)}(A_1,\ldots,A_j\cup\{p_i\},\ldots,A_m).
\]
Then, according to the algorithm, vector $p_i$ should be assigned to the partition
\[
\arg\min_j f^{(\tau)}_{i,j}.
\]

The procedure is described in Algorithm~\ref{online}. For each incoming vector, it only needs to execute Lines 5-9.

\begin{algorithm}
\KwIn{ $m$, the number of partitions; $d$, the dimension of each vector; $p_1,p_2,\ldots,p_n$, the $n$ vectors to be scheduled; $\tau$, the norm}
\KwOut{$A_1,\ldots,A_m$,  the $m$ partitions}
\Begin{
\For{$j$ from $1$ to $m$}{
$A_j\longleftarrow\emptyset$\;
}

\For{$i$ from $1$ to $n$}{
\eIf{$\exists A_j,A_j=\emptyset$}{
$A_j\longleftarrow A_j\bigcup\{p_i\}$\;
}
{
find $j$ to minimize $f^{(\tau)}_{i,j}$\;
$A_j\longleftarrow A_j\bigcup\{p_i\}$\;
}
}

}
\caption{Vector Scheduling} \label{online}
\end{algorithm}

Algorithm~\ref{online} with $\tau=\ln(md)$ is an $e\log(md)$ approximation algorithm to solve the corresponding VS-GLB.
Thus, we have the following result due to Theorem~\ref{thm:equ}.

\begin{lemma}
\label{lem:proof}
Algorithm~\ref{online} with $\tau=\ln(md)$ is an $e\log(md)$ approximation algorithm to solve VS.
\end{lemma}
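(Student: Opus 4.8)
The plan is to show that running Algorithm~\ref{online} on a VS instance with $\tau=\ln(md)$ is \emph{exactly} the same computation as running the online $L_\tau$-norm minimization algorithm of \cite{xuinfo} on the corresponding VS-GLB instance, read off the approximation ratio from Lemma~\ref{lem:appratio} with $l=md$, and then transfer the guarantee back to VS via Theorem~\ref{thm:equ}.

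First I would argue that the two executions coincide step by step, maintaining the correspondence of Theorem~\ref{thm:equ} \emph{incrementally}. Fix a prefix $p_1,\ldots,p_{i-1}$ already assigned, inducing partitions $A_1,\ldots,A_m$ and, under the transformation $x'_{i\overline s}=x_{is_1}$, a partial VS-GLB assignment that places each job on an anchor machine. When vector $p_i$ (job $i$) arrives, the GLB algorithm places it on a machine $s$ minimizing the $L_\tau$ norm of the resulting machine loads. If $s$ is not an anchor machine, then by (\ref{eq2}) some machine in $[t]$ would receive load $\infty$, so the norm is $\infty$, which cannot be optimal since any anchor machine yields a finite value; hence the minimization effectively ranges only over the $m$ anchor machines $(1,1),\ldots,(m,1)$. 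Choosing anchor machine $(j,1)$ corresponds precisely to putting $p_i$ into $A_j$, and by the computation in the proof of Lemma~\ref{lem:second} (using (\ref{eq1}) and (\ref{eq3})) the resulting $L_\tau$ norm of the VS-GLB machine loads equals $f^{(\tau)}(A_1,\ldots,A_j\cup\{p_i\},\ldots,A_m)=f^{(\tau)}_{i,j}$, which is exactly what Algorithm~\ref{online} minimizes in Line~9. The empty-partition shortcut in Lines~5--7 only picks some $A_j=\emptyset$ when one exists; since an empty partition contributes $0$ to every coordinate, it is a minimizer of $f^{(\tau)}_{i,j}$ by superadditivity of $t\mapsto t^\tau$ for $\tau\ge 1$, so this shortcut is still a valid choice for the $L_\tau$-minimizing rule. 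Therefore the output partition of Algorithm~\ref{online}, viewed through Theorem~\ref{thm:equ}, is an output of the GLB algorithm on VS-GLB.

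Second, I would invoke Lemma~\ref{lem:appratio}: the VS-GLB instance has $l=md$ machines, so minimizing the $L_\tau$ norm gives a $\frac{\tau}{\ln 2}(md)^{1/\tau}$-approximate solution to VS-GLB. Substituting $\tau=\ln(md)$ gives $(md)^{1/\ln(md)}=e$ and $\frac{\ln(md)}{\ln 2}=\log(md)$, hence ratio $e\log(md)$; here $\tau=\ln(md)$ need not be an integer, but Lemma~\ref{lem:appratio} holds for every positive $\tau$, so this is legitimate, the separate matter of computing the $L_{\ln(md)}$ norm in polynomial time being handled afterwards. Finally, by Theorem~\ref{thm:equ} and the remark following it, a $c$-approximate solution to VS-GLB transforms into a $c$-approximate solution to the corresponding VS with the same objective value (both having the same optimum), so Algorithm~\ref{online} is $e\log(md)$-approximate for VS.

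The main obstacle is the first step: making the "equivalence of executions" precise rather than hand-wavy. One must check that the bijection of Theorem~\ref{thm:equ} is preserved after each assignment, that the $\arg\min$ over all $md$ machines genuinely collapses to an $\arg\min$ over the $m$ anchor machines and hence over partitions, and that the identity between the GLB $L_\tau$ objective and $f^{(\tau)}$ holds for the intermediate partial assignments (which is where the bookkeeping of Lemma~\ref{lem:second}, and implicitly Lemma~\ref{lem:first}, is used), together with the observation that the empty-partition shortcut is consistent with the $L_\tau$-minimizing rule. Once that is pinned down, plugging in $\tau=\ln(md)$ and applying Theorem~\ref{thm:equ} is routine.
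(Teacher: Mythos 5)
Your proposal is correct and follows essentially the same route as the paper: the paper also establishes that Algorithm~\ref{online} simulates the $L_\tau$-minimization algorithm on VS-GLB (jobs are forced onto anchor machines, and the machine loads' $L_\tau$ norm equals $f^{(\tau)}$), applies Lemma~\ref{lem:appratio} with $l=md$ and $\tau=\ln(md)$, and transfers the ratio back via Theorem~\ref{thm:equ}. Your treatment is in fact somewhat more careful than the paper's (e.g., justifying the empty-partition shortcut and the step-by-step preservation of the correspondence), but the underlying argument is identical.
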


However, it is unclear whether Algorithm~\ref{online} with $\tau=\ln(md)$ can terminate within polynomial time. The algorithm requires the computation of $x^{\ln (md)}$ for some $x$. First,
 the number $\ln(md)$ is irrational, thus cannot be represented by polynomial bits to achieve arbitrary resolution. Second, even though we can approximate it by a rational number with acceptable resolution, the number
 $x^{\tilde{\tau}}$ may still be irrational, where
   $\tilde{\tau}$ is the rational approximation to $\tau$. For example, when $\tilde{\tau}=1.5$, there are lots of values of $x$ such that $x^{1.5}$ are irrational. Though we can still approximate it by a rational number, it is complicated to theoretically analyze whether the approximation ratio still holds and how the running time increases with respect to rational number approximation accuracy.
 This  problem has not been addressed in literature.

Our solution is to round $\ln (md)$ to the next integer $\lceil \ln(md)\rceil$ and compute the $L_{\lceil \ln(md)\rceil}$ norm. This guarantees polynomial running time, but causes the loss of approximation ratio. We show in the following that the loss is very small.

\subsection{Guaranteeing Polynomial Running Time}
To deal with the irrational number issue, we round $\ln (md) $ to the next integer $\lceil \ln(md)\rceil$. In the following, we analyze the resulting approximation ratio.

\begin{theorem}
\label{thm:loss}
Let $l$ be the number of machines. Minimizing $L_{\lceil \ln l\rceil}$ norm gives $e\log(l)+\frac{e\log(e)}{\ln l+1}$ approximation ratio to solve GLB.
\end{theorem}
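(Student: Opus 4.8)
The plan is to invoke Lemma~\ref{lem:appratio} with the specific choice $\tau=\lceil\ln l\rceil$ and then carefully estimate the resulting quantity $\frac{\tau}{\ln 2}\,l^{1/\tau}$. Write $\tau=\lceil\ln l\rceil$, so that $\ln l\le\tau<\ln l+1$ (we assume $l\ge 2$, so $\tau\ge 1$; the case $l=1$ is the trivial one-machine instance). First I would show that the auxiliary function $g(\tau)=\tau\,l^{1/\tau}=\tau\,e^{(\ln l)/\tau}$ is nondecreasing for $\tau\ge\ln l$: differentiating gives $g'(\tau)=e^{(\ln l)/\tau}\bigl(1-(\ln l)/\tau\bigr)\ge 0$ on that range. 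Since $\ln l\le\lceil\ln l\rceil\le\ln l+1$, monotonicity yields $g(\lceil\ln l\rceil)\le g(\ln l+1)$, which holds whether or not $\ln l$ happens to be an integer (if it is, $\lceil\ln l\rceil=\ln l$ and the bound is only easier).

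Next I would evaluate $g(\ln l+1)=(\ln l+1)\,e^{(\ln l)/(\ln l+1)}$ and rewrite the exponent as $(\ln l)/(\ln l+1)=1-1/(\ln l+1)$, so that $g(\ln l+1)=e\,(\ln l+1)\,e^{-1/(\ln l+1)}$. The key analytic ingredient is the elementary inequality $e^{-x}\le 1-x+x^2$ for $x\ge 0$, which follows from the standard bound $e^{-x}\le 1-x+x^2/2$ (proved by two differentiations of $1-x+x^2/2-e^{-x}$) together with $x^2/2\le x^2$. Applying it with $x=1/(\ln l+1)$ gives
\[
g(\ln l+1)\;\le\; e\,(\ln l+1)\Bigl(1-\tfrac{1}{\ln l+1}+\tfrac{1}{(\ln l+1)^2}\Bigr)\;=\;e\Bigl(\ln l+\tfrac{1}{\ln l+1}\Bigr).
\]
Combining with Lemma~\ref{lem:appratio}, the approximation ratio of minimizing the $L_{\lceil\ln l\rceil}$ norm is at most $\frac{1}{\ln 2}\,g(\lceil\ln l\rceil)\le\frac{e}{\ln 2}\bigl(\ln l+\frac{1}{\ln l+1}\bigr)$. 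Finally, converting to base-$2$ logarithms via $\log l=\ln l/\ln 2$ and $\log e=1/\ln 2$ rewrites this as exactly $e\log l+\frac{e\log e}{\ln l+1}$, which is the claimed bound.

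There is no serious obstacle here: the proof is a short calculation once the monotonicity of $g$ and the right linearization of $e^{-x}$ are in place. The only point requiring a little care is matching the stated second-order term precisely — the crude estimate $e^{-x}\le 1$ would only give $e\log l+e\log e$, so one must retain the quadratic correction (and in fact $e^{-x}\le 1-x+x^2/2$ would yield a slightly sharper bound, $e\log l+\frac{e\log e}{2(\ln l+1)}$, but the version stated in the theorem follows already from $1-x+x^2$). One should also double-check that the monotonicity argument genuinely covers the boundary case where $\ln l$ is an integer, as noted above.
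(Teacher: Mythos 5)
Your proof is correct and follows essentially the same route as the paper: invoke Lemma~\ref{lem:appratio} with $\tau=\lceil\ln l\rceil$, use the monotonicity of $g(x)=\frac{x}{\ln 2}\,l^{1/x}$ on $[\ln l,\infty)$ to reduce to bounding $g(\ln l+1)$, and then convert to base-$2$ logarithms. The only difference is in the last estimate --- the paper bounds $g(\ln l+1)-g(\ln l)=g'(\xi)$ via Lagrange's mean value theorem, while you evaluate $g(\ln l+1)=e(\ln l+1)e^{-1/(\ln l+1)}$ directly and apply $e^{-x}\le 1-x+x^2$ --- and both yield the identical bound $e\log l+\frac{e\log e}{\ln l+1}$.
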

\begin{proof} This result is obtained from Lemma~\ref{lem:appratio} by performing calculus analysis.
Let $g(x)=\frac{x}{\ln(2)}l^{1/x}$.
Consider the derivative of $g$,
\[
g'(x)=\frac{l^{1/x}}{\ln 2}\left(1-\frac{\ln l}{x}\right).
\]
For $x\ge \ln l$, it holds that $g'(x)\ge 0$ so that the function $g(x)$ is monotonically increasing. Since $\ln(l)\le\lceil \ln (l)\rceil)\le \ln(l)+1$, we have
\[
g( \lceil \ln (l)\rceil)-g(\ln l)\le g(\ln l+1)-g(\ln l).
 \]
 In addition, consider the two points $(\ln l,g(\ln l))$ and $(\ln l +1,g(\ln l+1))$. Due to Langrange's mean value theorem in calculus, there exists $\xi\in[\ln l,\ln l+1]$ such that
\[
g(\ln l+1)-g(\ln l)=g'(\xi).
\]

Since $\xi\ge \ln l$, we have $l^{1/\xi}\le e$. Additionally, $\xi\le \ln l+1$, so $1-\frac{\ln l}{\xi}\le \frac{1}{\ln l+1}$. Therefore, $g'(\xi)\le \frac{e}{ \ln(2)}\left(1-\frac{\ln l}{\xi}\right)\le\frac{e\log(e)}{\ln l+1}$. We have
\begin{align*}
g( \lceil \ln l\rceil)-g(\ln l)&\le g(\ln l+1)-g(\ln l)\le \frac{e\log(e)}{\ln l+1}
\end{align*}

Note that $g(\ln l)=e\log(l)$. This completes our proof.
\end{proof}

This theorem holds for general GLB problem, such as the one considered in \cite{xuinfo} \cite{lpnormbetterbounds} and \cite{glb}. Of course, it holds for VS-GLB as well. To have an intuition on the loss, we plot the two approximation ratios  with respect to the number of machines in Figure~\ref{fig:loss}. We can see that the loss is small.
\begin{figure}[htpd]
\centering
\includegraphics[width=.35\textwidth]{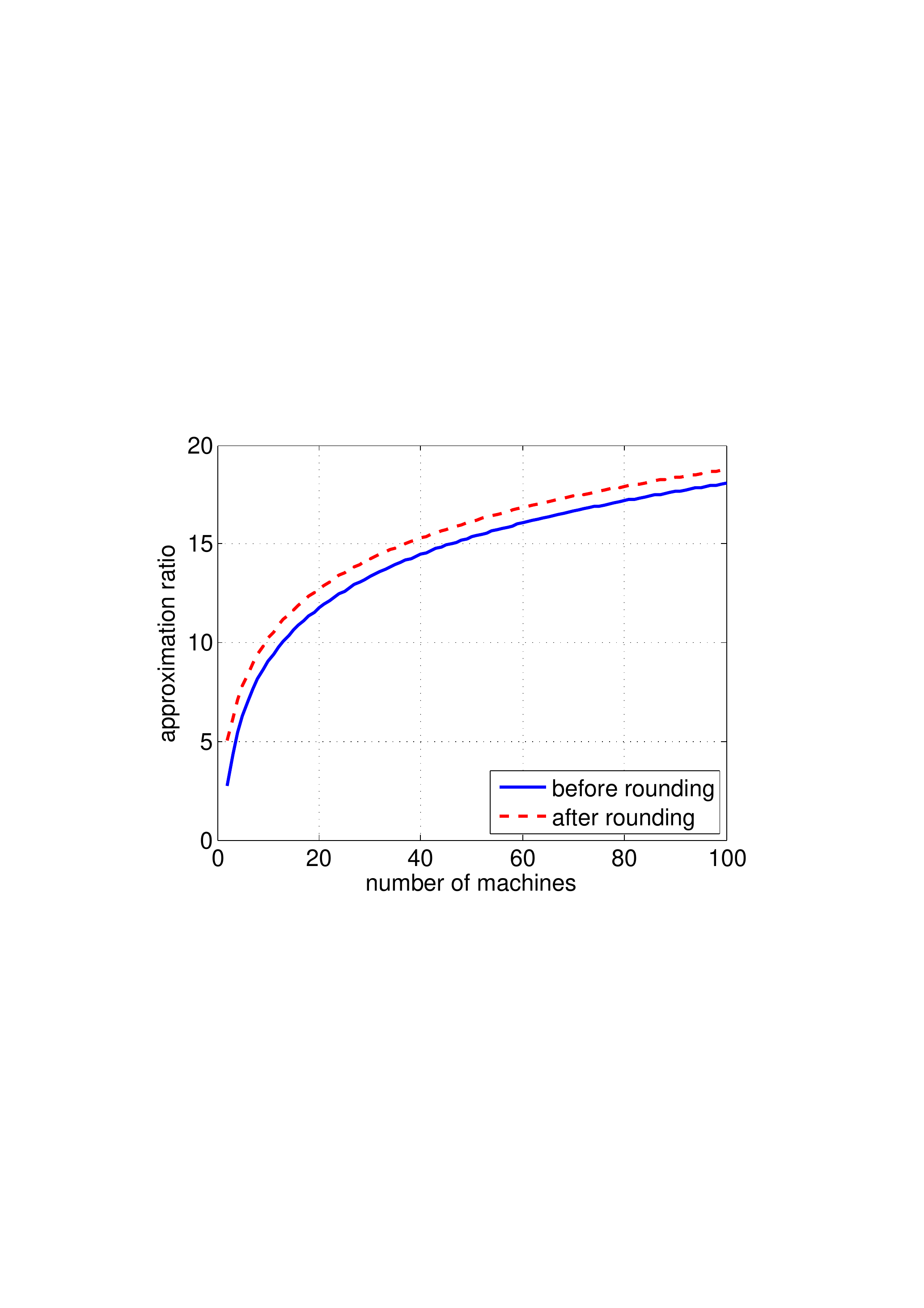}
\caption{Approximation ratio loss due to rounding. Before rounding, the approximation ratio is $y=e\log(x)$ and it becomes $y=e\log(x)+\frac{e\log(e)}{\ln(x)+1}$ after rounding.}
\label{fig:loss}
\end{figure}

We have the following corollary due to Theorem~\ref{thm:loss}.
\begin{corollary}
\label{co}
With $\tau=\lceil \ln (md)\rceil$, Algorithm~\ref{online} is an $e\log(md)+\frac{e\log(e)}{\ln(md)+1}$ approximation algorithm to VS, and it runs in polynomial time.
\end{corollary}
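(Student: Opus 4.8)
The plan is to combine the three ingredients already established in the paper. The statement to prove, Corollary~\ref{co}, has two halves: (i) the approximation bound $e\log(md)+\frac{e\log(e)}{\ln(md)+1}$ for Algorithm~\ref{online} run with $\tau=\lceil\ln(md)\rceil$, and (ii) the claim that this algorithm runs in polynomial time. For part (i) I would first invoke the equivalence Theorem~\ref{thm:equ}, which says VS and its corresponding VS-GLB have the same optimum and that approximation ratios transfer in both directions. The number of machines in VS-GLB is exactly $l=md$. So it suffices to bound the ratio of Algorithm~\ref{online} as an algorithm for VS-GLB: by construction, Algorithm~\ref{online} with parameter $\tau$ simulates the $L_\tau$-norm greedy rule of the GLB algorithm on the instance VS-GLB (the reduction to anchor machines already discussed means minimizing $f^{(\tau)}_{i,j}$ over partitions is the same as minimizing the $L_\tau$ norm over machines, the $\infty$ costs ruling out the non-anchor choices). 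Then Theorem~\ref{thm:loss}, applied with $l=md$, gives precisely the ratio $e\log(md)+\frac{e\log(e)}{\ln(md)+1}$ for solving VS-GLB, and transferring back through Theorem~\ref{thm:equ} yields the same ratio for VS.

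For part (ii), the running-time argument, I would observe that $\tau=\lceil\ln(md)\rceil$ is a positive integer bounded by $\ln(md)+1$, which is $O(\log(md))$, hence polynomial in the input size (indeed logarithmic). The outer loop of Algorithm~\ref{online} runs $n$ times; each iteration either assigns a vector to an empty partition (trivial) or computes $f^{(\tau)}_{i,j}$ for each of the $m$ candidate partitions. Each such evaluation amounts to, for one partition $j$, updating the $d$ coordinate sums $\sum_{i'\in A_j}p_{i'k}$ by adding $p_{ik}$, raising each to the integer power $\tau$, and summing; the powering can be done by repeated multiplication in $O(\tau)$ arithmetic operations (or $O(\log\tau)$ by fast exponentiation), and since all inputs are rationals, the bit-lengths stay polynomially bounded as long as $\tau$ is polynomial — which it is. Crucially, as already noted in the text, the final $(\cdot)^{1/\tau}$ step can be omitted because it is monotone, so no root extraction is needed; comparisons are made on $f^{(\tau)}_{i,j}{}^{\tau}$ directly. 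Multiplying these factors together gives a total running time polynomial in $n$, $m$, $d$, and the input bit-length.

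The one point that deserves a careful word — and the closest thing to an obstacle — is the bit-size bookkeeping: raising a rational with numerator/denominator of $B$ bits to the power $\tau$ produces a number of roughly $\tau B$ bits, and summing $m$ such numbers stays within a constant factor of that; since $\tau=O(\log(md))$ this is still polynomial, but one should state explicitly that the greedy updates never blow the representation up super-polynomially (each $A_j$'s coordinate sum is a sum of at most $n$ input entries, hence of size $O(B+\log n)$ before powering). Everything else is a direct citation: Theorem~\ref{thm:equ} for the reduction's fidelity, Theorem~\ref{thm:loss} with $l=md$ for the approximation ratio, and the observation that $\lceil\ln(md)\rceil$ is a small integer for tractability. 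I would therefore keep the proof short, essentially: "Immediate from Theorem~\ref{thm:equ} and Theorem~\ref{thm:loss} with $l=md$; polynomiality follows since $\tau=\lceil\ln(md)\rceil$ is a positive integer of magnitude $O(\log(md))$, so each of the $O(nm)$ evaluations of $f^{(\tau)}_{i,j}$ — with the monotone $(\cdot)^{1/\tau}$ step dropped — costs only polynomially many arithmetic operations on numbers of polynomial bit-length."
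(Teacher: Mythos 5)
Your proposal matches the paper's argument: the approximation bound is obtained by applying Theorem~\ref{thm:loss} with $l=md$ to the VS-GLB instance and transferring it back through Theorem~\ref{thm:equ}, and polynomial running time follows because $\tau=\lceil\ln(md)\rceil$ is a small integer, so each evaluation of $f^{(\tau)}_{i,j}$ needs only integer powering (the paper likewise drops the monotone $(\cdot)^{1/\tau}$ step and counts $O(d\log(\tau)nm^2)$ arithmetic operations on numbers of polynomial bit-length). Your extra remark on bit-size bookkeeping is a sound elaboration of the paper's brief closing note on multiplication cost, not a departure from its route.
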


The polynomial running time can be shown by the following analysis.
We assume $\tau=\lceil \ln (md)\rceil$ if not specified.
The main time consuming step is to minimize $f^{(\tau)}_{i,j}$ over $j$ for given $i$. We can omit the computation of the outer $1/\tau$ power since function $x^{y}$ is monotonic for $x\ge 0$ and $y>0$. In computing $L_\tau$ norm, there is a basic operation, the integer power of a number, $a^\tau$, where $a$ is an element in any vector $\overline A_j$. The naive approach, which multiplies $a$ iteratively, involves $\tau-1$ multiplications. This can be improved by utilizing partial multiplication results. For example, computing $a^8$ as $((a^2)^2)^2$ only needs $3$ multiplications. Generally, computing $a^{\tau}$ requires $\lfloor \log(\tau) \rfloor+\nu(\tau)-1$ multiplications, where $\nu(\tau)$ is the number of $1$s in the binary representation of $\tau$ (Chapter 4.6.3 in~\cite{Knuth:1997:ACP:270146}). In the following, we put an upper bound $2\log\tau$ to the number of multiplications needed to compute $a^\tau$.

To compute $f^{(\tau)}_{i,j}$ for given $i$ and $j$, it needs $d+m-1$ additions (adding $p_i$ to $\overline A_j$, suppose  $\overline A_j$ is maintained in each iteration) and $2md\log(\tau)$ multiplications ($md$ numbers, each needs to compute its $\tau$ power). To find the optimal $j$ for given $i$, we need to compute $f^{(\tau)}_{i,j}$ for all $j$, and select the optimal one by comparison. This procedure needs $m(d+m-1)$ additions, $2d\log(\tau) m^2$ multiplications, and $m-1$ comparisons. In summary, it takes $O(d\log(\tau) m^2)$  time for one vector. For the overall algorithm, it takes $O(d\log(\tau) nm^2)$ time. The computations can be sped up by exploiting the problem structure. The complexity can be reduced to $O(d\log(\tau) mn)$, dropping one $m$ factor, as shown in the following.

\subsection{Computation Speedup}

Towards VS-GLB, we have the following lemma. Note that this lemma does not hold for the general GLB problem.
\begin{lemma}
\label{lem:speedup}
For any $j_1,j_2$, it holds that $f^{(\tau)}_{i,j_1}>f^{(\tau)}_{i,j_2}$ if and only if \[
\left\|\overline{A_{j_1}\cup\{p_i\}}\right\|_\tau^\tau-\left\|\overline{A_{j_1}}\right\|_\tau^\tau>\left\|\overline{A_{j_2}\cup\{p_i\}}\right\|_\tau^\tau-\left\|\overline{A_{j_2}}\right\|_\tau^\tau
\]
\end{lemma}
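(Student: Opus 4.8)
The plan is to unwind the definition of $f^{(\tau)}_{i,j}$ and observe that most of the summands are common to all choices of $j$, so the comparison reduces to a single term. Recall that
\[
\left(f^{(\tau)}_{i,j}\right)^\tau = \left\|\overline{A_j\cup\{p_i\}}\right\|_\tau^\tau + \sum_{j'\ne j}\left\|\overline{A_{j'}}\right\|_\tau^\tau.
\]
The key observation is that if we add and subtract $\left\|\overline{A_j}\right\|_\tau^\tau$, this equals $S + \left(\left\|\overline{A_j\cup\{p_i\}}\right\|_\tau^\tau - \left\|\overline{A_j}\right\|_\tau^\tau\right)$, where $S = \sum_{j'}\left\|\overline{A_{j'}}\right\|_\tau^\tau$ is a quantity that does \emph{not} depend on $j$ (it is just the current $L_\tau^\tau$ value before $p_i$ is placed).

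The steps, in order, are: First, since $x\mapsto x^{1/\tau}$ is strictly increasing on $[0,\infty)$ for $\tau>0$, the inequality $f^{(\tau)}_{i,j_1}>f^{(\tau)}_{i,j_2}$ is equivalent to $\left(f^{(\tau)}_{i,j_1}\right)^\tau>\left(f^{(\tau)}_{i,j_2}\right)^\tau$. Second, expand both $\tau$-th powers using the displayed identity above, writing each as $S$ plus the corresponding ``marginal increase'' term. Third, cancel the common $S$ from both sides; what remains is exactly the claimed inequality between the two marginal increases. Each direction of the ``if and only if'' follows from the same chain of equivalences read in the appropriate order.

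I do not expect a genuine obstacle here — the statement is essentially bookkeeping. The one point that requires a line of justification is that $S$ is well-defined and finite and genuinely independent of the index being optimized over: this uses the structure of VS-GLB (Lemma~\ref{lem:second}) to guarantee the $L_\tau$ norms in question are finite, i.e., that no anchor-machine load is infinite, which is why the lemma is flagged as not holding for general GLB. For general GLB the per-machine loads depend on the assignment in a way that does not decompose additively over machines with a single affected term, so the cancellation argument breaks. Finally I would note that this lemma justifies the speedup in Algorithm~\ref{online}: instead of recomputing all $m$ of the $\|\overline{A_{j'}}\|_\tau^\tau$ terms for each candidate $j$, one only maintains $\|\overline{A_{j}}\|_\tau^\tau$ per partition and compares the $m$ marginal increases, saving a factor of $m$.
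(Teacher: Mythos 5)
Your proof is correct and is essentially the paper's own argument: the paper's one-line proof ("adding $\|\overline A_1\|_\tau^\tau+\ldots+\|\overline A_m\|_\tau^\tau$ to both sides") is exactly your cancellation of the assignment-independent sum $S$, combined with the monotonicity of $x\mapsto x^{1/\tau}$, which you spell out more explicitly. The additional remarks on finiteness and on why the decomposition fails for general GLB are fine but not needed for the proof itself.
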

\begin{proof}
Adding $\|\overline A_1\|_{\tau}^\tau+\ldots+\|\overline A_m\|_{\tau}^\tau$ to both sides proves the lemma.
\end{proof}

Algorithm~\ref{speedup} shows the final design. For each partition $A_j$, the algorithm maintains two variables, the vector $\overline A_j$ ($\mu_j$ in the algorithm) and its norm $\|\overline A_j\|_\tau^\tau$ ($\delta_j$ in the algorithm). If there is no empty partition, then each incoming vector searches over all partitions to find the  $j$ to minimize $\left\|\overline{A_{j}\cup\{p_i\}}\right\|_\tau^\tau-\left\|\overline{A_{j}}\right\|_\tau^\tau$ (Lines 12-24). As Lemma~\ref{lem:speedup} shows, this is equivalent to minimize $f^{(\tau)}_{i,j}$.

For the running time, consider a new vector that cannot find an empty partition. There are $md$ additions (Lines 13,16), $2md\log(\tau)$ multiplications (Lines 14,17), $2(m-1)$ subtractions and $m-1$ comparisons (Line 18). The dominating factor is $md\log(\tau)$. This is for one vector. For all $n$ vectors, the running time is $O(mnd\log(\tau))$, compared to $O(m^2nd\log\tau)$ before speedup. Substituting $\tau=\lceil\ln(md)\rceil$ into the formula yields $O(nmd\ln\ln (md))$ running time, polynomial in the input length (note $m<n$). This analysis, together with Corollary~\ref{co} and Lemma~\ref{lem:speedup}, gives the following theorem.

\begin{theorem}
Algorithm~\ref{speedup}  is an $e\log(md)+\frac{e\log(e)}{\ln(md)+1}$ approximation algorithm to VS. It runs in $O(nmd\ln\ln (md))$ time.
\end{theorem}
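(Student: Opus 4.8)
The plan is to assemble the final theorem from three ingredients already established in the paper, so that essentially no new mathematics is required — only a careful bookkeeping of the pieces. First I would invoke Corollary~\ref{co}, which already states that Algorithm~\ref{online} with $\tau=\lceil\ln(md)\rceil$ achieves approximation ratio $e\log(md)+\frac{e\log(e)}{\ln(md)+1}$ for VS. Then I would observe that Algorithm~\ref{speedup} differs from Algorithm~\ref{online} only in \emph{how} it selects the partition for each incoming vector: instead of directly comparing the norms $f^{(\tau)}_{i,j}$, it compares the marginal increments $\bigl\|\overline{A_j\cup\{p_i\}}\bigr\|_\tau^\tau-\bigl\|\overline{A_j}\bigr\|_\tau^\tau$. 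By Lemma~\ref{lem:speedup}, for any two partitions $j_1,j_2$ we have $f^{(\tau)}_{i,j_1}>f^{(\tau)}_{i,j_2}$ iff the corresponding increment for $j_1$ exceeds that for $j_2$; hence the $\arg\min$ over the increments coincides with the $\arg\min$ over the $f^{(\tau)}_{i,j}$ (ties may be broken arbitrarily and consistently). Therefore Algorithm~\ref{speedup} produces exactly the same assignment as Algorithm~\ref{online}, and inherits its approximation ratio.

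The second half of the statement is the running-time bound, for which I would simply present the per-vector cost accounting already sketched before the theorem. For a vector that finds no empty partition, the work is: forming $\overline{A_j\cup\{p_i\}}$ for each $j$ costs $O(md)$ additions; raising each of the $md$ coordinates to the $\tau$-th power costs $O(md\log\tau)$ multiplications via repeated squaring (using the bound of $2\log\tau$ multiplications per power, from Knuth, Chapter 4.6.3); and the subtractions and comparisons to pick the minimum cost $O(m)$. So the dominant term is $O(md\log\tau)$ per vector, giving $O(nmd\log\tau)$ over all $n$ vectors. Substituting $\tau=\lceil\ln(md)\rceil$, so that $\log\tau=O(\log\ln(md))=O(\ln\ln(md))$, yields $O(nmd\ln\ln(md))$, and this is polynomial in the input length since an instance needs $\Omega(nd)$ bits and $m<n$.

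The only genuine subtlety — the "hard part," though it is mild — is making sure that the reduction to Algorithm~\ref{online} is faithful: one must confirm that maintaining $\mu_j=\overline{A_j}$ and $\delta_j=\|\overline{A_j}\|_\tau^\tau$ incrementally is correct, i.e., that after assigning $p_i$ to the chosen partition $j^\star$, the updated $\mu_{j^\star}$ and $\delta_{j^\star}$ are exactly $\overline{A_{j^\star}\cup\{p_i\}}$ and its $\tau$-norm, so that the invariant is preserved for the next iteration. This is immediate from the definitions but should be stated. One should also note that the empty-partition special case (Lines 5--9 / the analogous lines of Algorithm~\ref{speedup}) is handled identically in both algorithms, and that when $m\ge n$ the trivial optimal assignment is recovered, consistent with the paper's standing assumption $m<n$. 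With these remarks in place the theorem follows by combining Corollary~\ref{co}, Lemma~\ref{lem:speedup}, and the running-time count, exactly as the sentence preceding the theorem indicates.
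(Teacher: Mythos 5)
Your proposal is correct and follows essentially the same route as the paper, which likewise obtains the theorem by combining Corollary~\ref{co} (for the approximation ratio), Lemma~\ref{lem:speedup} (to show the sped-up selection rule picks the same partition), and the per-vector operation count dominated by $O(md\log\tau)$ with $\tau=\lceil\ln(md)\rceil$. Your added remarks on maintaining the invariant for $\mu_j$ and $\delta_j$ are a harmless (and slightly more careful) elaboration of what the paper leaves implicit.
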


It should be noted that we treat multiplications as basic operations in the above running time analysis. The running time will be different if we further consider the complexity of computing multiplications. Multiplying two $n$-bit integers takes time $O(n^{1.59})$ for a recursive algorithm (Chapter 5.5 in~\cite{Kleinberg:2005:AD:1051910}). Applying such analysis to Algorithm~\ref{speedup}, however, requires the consideration of the length of the binary representation of each numeric value in the vectors, which may be complicated.  Nevertheless, it is clear that multiplications run in polynomial time in the input length. Thus Algorithm~\ref{speedup} terminates certainly in polynomial time.

 \begin{algorithm}
\KwIn{ $m$, the number of partitions; $d$, the dimension of each vector; $p_1,p_2,\ldots,p_n$, the $n$ vectors to be scheduled}
\KwOut{$A_1,\ldots,A_m$,  the $m$ partitions}
\Begin{
\For{$j$ from $1$ to $m$}{
$A_j\longleftarrow\emptyset$\;
$\mathbf \mu_j\longleftarrow \mathbf 0$\tcp*{vector $\overline A_j$}
$\delta_j\longleftarrow 0$ \tcp*{scalar $\|\overline A_j\|_\tau^\tau$}
}
\For{$i$ from $1$ to $n$}{
\eIf{$\exists A_j,A_j=\emptyset$}{
$A_j\longleftarrow A_j\bigcup\{p_i\}$\;
$\mathbf \mu_j\longleftarrow p_i$\;
$\delta_j\longleftarrow \|p_i\|_\tau^\tau$\;
}
{
$j_{\min}\longleftarrow 1$\tcp*{partition index}
$\mathbf \mu_{\min}\longleftarrow \mathbf\mu_1+p_1$\;
$\delta_{\min}\longleftarrow  \|\mu_{\min}\|_\tau^\tau$\;
\For{$j$ from $2$ to $m$}{
$\tilde{\mu}\longleftarrow\mathbf\mu_j+p_i$\tcp*{vector addition}
$\tilde \delta\longleftarrow \|\tilde{\mu}\|_\tau^\tau$\;
\If{$\delta_{\min}-\delta_{j_{\min}}>\tilde \delta-\delta_j$}{
$j_{\min}\longleftarrow j$\;
$\mathbf \mu_{\min}\longleftarrow \tilde{\mathbf{\mu}}$\;
$\delta_{\min}\longleftarrow \tilde \delta$\;
}
}
$\mu_{j_{\min}}\longleftarrow \mu_{\min}$\;
$\delta_{j_{\min}}\longleftarrow \delta_{\min}$\;
$A_{j_{\min}}\longleftarrow A_{j_{\min}}\bigcup\{p_i\}$\;
}
}
}
\caption{Sped-up Vector Scheduling with $\tau=\lceil\ln (md) \rceil$} \label{speedup}
\end{algorithm}

\subsection{Simulations}
We implement three approaches for comparison: Algorithm~1 with $\tau=\ln (md)$, Algorithm~2 with $\tau=\lceil \ln (md) \rceil$, and a list scheduling algorithm mentioned in \cite{mdpacking}. The list scheduling algorithm is a $(d+1)$ approximation algorithm for vector scheduling. It ignores the multi-dimension property of vectors, and treats vectors as scalars equal to the summation of elements. We did not implement the $O(\ln^2 (d))$ approximation algorithm in  \cite{mdpacking} due to complicated implementation.

We consider two scenarios. In the first scenario, we study the approximation ratio of each algorithm. This requires the computation of the optimal solution, which is done by  enumerating all solutions and is time consuming, so we only consider small problem instances. Specifically, we consider problem instances with 3 machines (m=3), 10 jobs (n=10) and a dimension of 20 (d=20). For each job, its elements are drawn independently  from the uniform distribution in the range of $[0,1]$. Under such settings, the worst-case approximation ratios for Algorithm~1, Algorithm~2 and the list scheduling algorithm are 16.0566, 16.8264 and 21 respectively. We generate 100 problem instances and Figure~\ref{fig:apratio} shows the box plot of the approximation ratio of each algorithm. We can see that the empirical performance of every algorithm is much better than that suggested by the worst-case analysis, and Algorithms 1 and 2 outperform the list scheduling algorithm.

\begin{figure}[htpd]
\centering
\includegraphics[width=.35\textwidth]{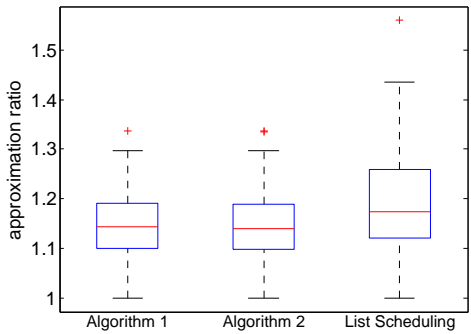}
\caption{Approximation ratio of three vector scheduling approaches. One hundred problem instances are generated to plot this figure.}
\label{fig:apratio}
\end{figure}

In the second scenario, we compare the three algorithms on larger problem instances. There are 10 machines  and 100 jobs.  The elements of a job are drawn from a uniform distribution as before. We vary the dimension $d$ from 10 to 40 with increments of 5. For each dimension, we generate 100 problem instances and compute the average makespan of the three approaches. Figure~\ref{fig:compare} shows that Algorithm~1 and Algorithm~2 perform similarly, and both of them greatly outperform the list scheduling algorithm. Note that with the increase of dimension, the makespan of all approaches increases. This is because the probability of an imbalanced dimension increases in this case.

\begin{figure}[htpd]
\centering
\includegraphics[width=.35\textwidth]{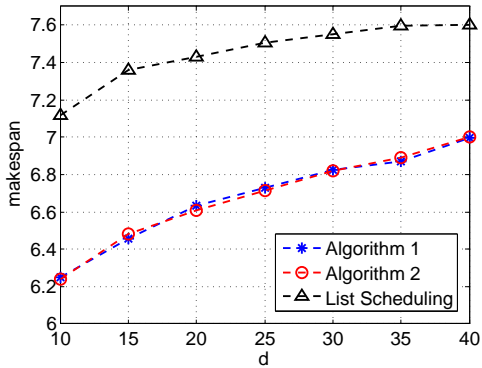}
\caption{Compare three vector scheduling approaches in terms of makespan. Each point in the figure is averaged over 100 problem instances.}
\label{fig:compare}
\end{figure}

\section{Conclusion}
In this work, we connect the vector scheduling problem with the generalized load balancing problem, and obtain new results by applying existing results to each other. Besides showing that generalized load balancing does not admit constant approximation algorithms unless $P=NP$, we give the first non-trivial online algorithm for vector scheduling. This online algorithm also provides better approximation bound to solve VS than existing offline polynomial time algorithm.

\bibliographystyle{IEEEtran}

\bibliography{IEEEabrv,zxj}

\end{document}